\pgfplotsset{compat=newest}
\newtheorem{theorem}{Theorem}
\newtheorem{lemma}{Lemma}
\newtheorem{remark}{Remark}
\newtheorem{definition}{Definition}
\newtheorem{construction}{Construction}
\def\ve#1{{\mathchoice{\mbox{\boldmath$\displaystyle #1$}}%
              {\mbox{\boldmath$\textstyle #1$}}%
              {\mbox{\boldmath$\scriptstyle #1$}}%
              {\mbox{\boldmath$\scriptscriptstyle #1$}}}}
\newcommand{\F}{\ensuremath{\mathbb{F}}}
\newcommand{\code}{\ensuremath{\mathcal{C}}}
\newcommand{\Code}{\ensuremath{\mathcal{C}}}
\renewcommand{\L}{\ensuremath{\mathcal{L}}}
\renewcommand{\O}{\ensuremath{\mathcal{O}}}
\newcommand{\R}{\ensuremath{\mathcal{R}}}
\newcommand{\W}{\ensuremath{\mathcal{W}}}
\newcommand{\s}{\ensuremath{\mathbf{s}}}
\newcommand{\sbar}{\ensuremath{\bar{\mathbf{s}}}}
\newcommand{\x}{\ensuremath{\mathbf{x}}}
\newcommand{\y}{\ensuremath{\mathbf{y}}}
\renewcommand{\H}{\ensuremath{\mathbf{H}}}
\newcommand{\C}{\ensuremath{\mathbf{C}}}
\newcommand{\0}{\ensuremath{\mathbf{0}}}
\renewcommand{\c}{\ensuremath{\mathbf{c}}}
\newcommand{\alphaVec}{\ensuremath{\ve{\alpha}}}
\newcommand{\gammaVec}{\ensuremath{\ve{\gamma}}}
\newcommand{\pmds}{\ensuremath{\mathsf{PMDS}}}
\newcommand{\sd}{\ensuremath{\mathsf{SD}}}
\newcommand{\Ha}{\ensuremath{\H^{(a)}}}
\newcommand{\order}{\mathcal{O}}
\newenvironment{bsmallmatrix}
  {\left[\begin{smallmatrix}}
  {\end{smallmatrix}\right]}
\title{Partial MDS Codes with Local Regeneration}
\author{%
  \IEEEauthorblockN{Lukas Holzbaur\IEEEauthorrefmark{1},
                    Sven Puchinger\IEEEauthorrefmark{2},
                    Eitan Yaakobi\IEEEauthorrefmark{3},
                    and Antonia Wachter-Zeh\IEEEauthorrefmark{1}}
  \IEEEauthorblockA{\IEEEauthorrefmark{1}%
                    Technical University of Munich,
                    \{lukas.holzbaur, antonia.wachter-zeh\}@tum.de}
  \IEEEauthorblockA{\IEEEauthorrefmark{2}%
                    Technical University of Denmark,
                    svepu@dtu.dk}
  \IEEEauthorblockA{\IEEEauthorrefmark{3}%
                    Technion --- Israel Institute of Technology,
                    yaakobi@cs.technion.ac.il}
  \thanks{The work of L. Holzbaur was supported by the Technical University of Munich -- Institute for Advanced Study, funded by the German Excellence Initiative and European Union 7th Framework Programme under Grant Agreement No. 291763 and the German Research Foundation (Deutsche Forschungsgemeinschaft, DFG) under Grant No. WA3907/1-1. Sven Puchinger has received funding from the European Union’s Horizon 2020 research and innovation programme under the Marie Skłodowska-Curie grant agreement no.~713683 (COFUNDfellowsDTU).}
}
\begin{document}

\maketitle

\begin{abstract}
  Partial MDS (PMDS) and sector-disk (SD) codes are classes of erasure codes that combine locality with strong erasure correction capabilities. We construct PMDS and SD codes where each local code is a bandwidth-optimal regenerating MDS code. The constructions require significantly smaller field size than the only other construction known in literature.
\end{abstract}

\section{Introduction}

Distributed data storage is ever increasing its importance with the amount of data stored by cloud service providers and data centers in general reaching staggering heights. The data is commonly spread over a number of nodes (servers or hard drives) in a \emph{distributed storage system} (DSS), with some additional redundancy to protect the system from data loss in the case of node failures (erasures). The resilience of a DSS against such events can be measured either by the minimal \emph{number of nodes} that needs to fail for data loss to occur, i.e., the \emph{distance} of the storage code, or by the expected time the system can be operated before a failure occurs that causes data loss, referred to as the \emph{mean time to data loss}. For both measures the use of maximum distance separable (MDS) codes provides the optimal trade-off between storage overhead and resilience to data loss (note that replication is a trivial MDS code). The downside of using MDS codes is the cost of recovering (replacing) a failed node. Consider a storage system with $k$ information nodes and $s$ nodes for redundancy. If an MDS code is used for the recovery of a node by means of erasure decoding, it  necessarily involves at least $k$ nodes (helpers) and, if done by straight-forward methods, a large amount of network traffic, namely the download of the entire content from $k$ nodes. To address these issues, the concepts of \emph{locally repairable codes} (LRCs) \cite{gopalan2012locality,kamath2014codes,rawat2013TiT,Krishnan2018,gligoroski2017locally,hollmann2014minimum,li2016relieving} and \emph{regenerating codes} \cite{dimakis2010network,cadambe2013asymptotic,ye2017optimalRepair} have been introduced.

To lower the amount of network traffic in recovery, regenerating codes allow for repairing nodes by accessing $d > k$ nodes, but only retrieve a function of the data stored on each node. This significantly decreases the repair traffic. Lower bounds on the required traffic for repair have been derived in \cite{dimakis2010network,cadambe2013asymptotic} which lead to two extremal code classes, namely \emph{minimum bandwidth regenerating} (MBR) and \emph{minimum storage regenerating} (MSR) \emph{codes}. MBR codes offer the lowest possible repair traffic, but at the cost of increased storage overhead compared to MDS codes. In this work we consider $d$-MSR codes, which require more network traffic for repair than MBR codes, but are optimal in terms of storage overhead, i.e., they are MDS.

To address the other downside of node recovery in MDS codes, namely the large number of required helper nodes, LRCs introduce additional redundancy to the system, such that in the (more likely) case of a few node failures the recovery only involves less than $k$ helper nodes, i.e., can be performed \emph{locally}. This subset of helper nodes is referred to as a \emph{local code}. Recently several constructions of LRCs which maximize the distance have been proposed. However, when considering the mean time to data loss as the performance metric, distance-optimal LRCs are not necessarily optimal, as it is possible to tolerate many failure patterns involving a larger number of nodes than the number that can be guaranteed, while still fulfilling the locality constraints \cite{tamo2016optimal,holzbaur2019error}. \emph{Partial MDS} (PMDS) \emph{codes} \cite{blaum2013partial,blaum2016construction,gabrys2018constructions}, also referred to as \emph{maximally recoverable codes} \cite{chen2007maximally,gopalan2014explicit}, are a subclass of LRCs which guarantee to tolerate \emph{all} failure patterns possible under these constraints and thereby maximize the mean time to data loss. Specifically, an $(r,s)$-PMDS code of length $\mu n$ can be partitioned into $\mu$ local groups of size $n$, such that any erasure pattern with $r$ erasures in each local group plus any $s$ erasures in arbitrary positions can be recovered.

However, the local recovery of nodes still induces a large amount of network traffic, as the entire content of the helper nodes needs to be downloaded when considering straight-forward use recovery algorithms. To circumvent this bottleneck, several regenerating local codes \cite{dimakis2010network} have been proposed \cite{kamath2014codes,rawat2013TiT,Krishnan2018,gligoroski2017locally,hollmann2014minimum,li2016relieving}. In \cite{calis2016} it was shown that the LRC construction of \cite{rawat2013TiT} is in fact a PMDS code, implicitly giving the first construction of PMDS codes with local regeneration\footnote{The construction in \cite{rawat2013TiT} consists of two encoding stages, where in the second stage an arbitrary linear MDS code can be used to obtain the local codes. In \cite{calis2016} it was shown that the construction in fact gives a PMDS code, independent of the explicit choice of the MDS code in the second encoding stage. It follows that using a regenerating MDS code in the second encoding stage results in a PMDS code with local regeneration.}. However, these PMDS codes require a field size exponential in the length of the code. 

In this work, we propose the first constructions of locally regenerating PMDS codes with field size that is not exponential in the length. Our first construction gives $(r,s=2)$-PMDS codes where each local code is a $d$-MSR code by a non-trivial combination of the $(r,s=2)$-PMDS codes of \cite{blaum2016construction} and the MSR codes of \cite{ye2017optimalRepair}, over a field of size $O(rn)$. The second construction of $(r,s)$-PMDS codes with local $d$-MSR codes, which is valid for any value of $s$, combines the $(r,s)$-PMDS construction of \cite{gabrys2018constructions} with the MSR codes of \cite{ye2017optimalRepair} and requires a field size of $O((\mu n)^{s(r+1)})$, when $r=O(1)$ and $s = O(1)$. The number of field elements stored at each node (\emph{subpacketization}) is $\ell=O(r^n)$, and thus equal to the subpacketization in the  MSR code construction of \cite{ye2017optimalRepair}.

\section{Preliminaries}

\subsection{Notation}
We write $[a,b]$ for the set of integers $\{a,a+1,...,b\}$ and $[b]$ if $a=1$.
For a set of integers $R \subseteq [n]$ and a code $\code$ of length $n$ we write $\code |_R$ for the code obtained by restricting $\code$ to the positions indexed by $R$, i.e., puncturing in the positions $[n]\setminus R$. For an element $\alpha \in \F$ we denote its order by $\O(\alpha)$.

We denote a code of length $n$, dimension $k$, and distance $d_{\min}$ by $[n,k,d_{\min}]$. For a code over $\F_{q^\ell}$ that is linear over $\F_q$ we write $[n,k,d_{\min};\ell]$. The parameter $\ell$ is referred to as the subpacketization of the code and as each codeword of this code can be viewed as an array over $\F_q$ with $n$ columns and $\ell$ rows, we also refer to such codes as \emph{array codes}. If the distance $d_{\min}$ is clear from context or not of interest, we omit it from the notation.

This work is largely based on the constructions of PMDS codes in \cite{blaum2016construction,gabrys2018constructions} and the construction of MSR codes in \cite{ye2017optimalRepair}. Since the notations in these works are conflicting, i.e., the same symbols are used for different parameters of the codes, Table~\ref{tab:notation} provides an overview of the notation used in this work compared to these works.
\begin{table}[htb]
  \centering
  \caption{Overview of the notation used in this work compared to the notation used in \cite{blaum2016construction,gabrys2018constructions,ye2017optimalRepair}. As we are constructing PMDS codes with local MSR codes in the following, the length and number of parities in the MSR code construction of \cite{ye2017optimalRepair} are matched with the parameters of the local codes in our work. }
  \begin{tabular}{lcccc}
    Description &\cite{blaum2016construction}&\cite{gabrys2018constructions}& \cite{ye2017optimalRepair} & This work \\ \hline
    Number of local groups & $r$ & $m$ & - & $\mu$ \\
    Length of local MSR code & $n$ & $n$ & $n$ & $n$ \\
    Number of local parity symbols & $m$ & $r$ & $r$ & $r$ \\
    Number of global parity symbols & $s$ & $s$ & - & $s$ \\
    Code length & $rn$ & $mn$ & -  & $\mu n$ \\
    Subpacketization & - & - & $l$ & $\ell$ \\
    Number of nodes needed for repair & - & - & $d$ & $d$
  \end{tabular}
  \label{tab:notation}
\end{table}
\subsection{Definitions}

We begin by formally defining PMDS codes in our notation.
\begin{definition}[Partial MDS array codes]\label{def:pmds}
  Let $\mu n,\mu,n,r,s,\ell \in \mathbb{Z}_{>0}$ such that $r\leq n$ and $s\leq (n-r)\mu$.

  Let $\code \subset \F_q^{\ell \times \mu n}$ be a linear $[\mu n,(n-r)\mu -s;\ell]$ code. The code $\code$ is a $\pmds(\mu n,\mu,n,r,s;\ell)$ \emph{partial MDS array code} if there exists a partition $\mathcal{W} = \{W_1,W_2,...,W_{\mu}\}$ of $[\mu n]$ with $|W_i|=n$ for all $i\in [\mu]$ such that
  \begin{itemize}
  \item the code $\code |_{W_i}$ is an $[n,n-r,r+1; \ell]$ MDS code for all $i \in [\mu]$ and
  \item the code $\code |_{[\mu n] \setminus \cup_{i=1}^{\mu} E_i}$ is an $[n-r\mu,n-r\mu-s,s+1;\ell]$ MDS code, where $E_i \subset W_i$ with $|E_i|=r$ for all $i\in [\mu]$.
  \end{itemize}
\end{definition}
We refer to the code $\code |_{W_i}$ as the $i$-th \emph{local code}.
\begin{remark}
In \cite{blaum2016construction,gabrys2018constructions} each codeword of the PMDS and SD codes is regarded as an $\mu \times n$ array, where for PMDS codes each row can correct $r$ erasures and for SD codes $r$ erased columns can be corrected. As we will construct PMDS and SD codes with local MSR codes, we will require subpacketization, i.e., each node will not store a symbol, but a vector of multiple symbols. To avoid having different types of rows, we adopt the terminology commonly used in the LRC literature and view the codewords of a PMDS or SD code as vectors, and what we refer to as \emph{local codes} is equivalent to the \emph{rows} of \cite{blaum2016construction,gabrys2018constructions}.
\end{remark}

In the following we will construct both PMDS and SD codes with local regeneration, but since the concepts and proofs are mostly the same, we provide them in less detail for SD codes.
\begin{remark}\label{rem:SDcodes}
  Sector-Disk codes are defined similar to PMDS codes as in Definition~\ref{def:pmds}, except that $E_1=E_2= \ldots =E_{\mu}$ holds.
\end{remark}

\begin{definition}[Regenerating code \cite{dimakis2010network,cadambe2013asymptotic}]
  Let $\mathcal{F}, \R \subset [n]$ be two disjoint subsets with $\mathcal{F} \leq r$ and $\R \geq n-r$. Let $\code$ be an $[n,n-r;\ell]$ MDS array code $\code$ over $\F_q$. Define $M(\code,\mathcal{F},\R)$ as the smallest number of symbols of $\F$ one needs to download from the surviving nodes indexed by $\R$ to recover the erased nodes indexed by $\mathcal{F}$. Then
  \begin{equation}\label{eq:boundRegenrating}
    M(\code,\mathcal{F},\R) \geq \frac{|\mathcal{F}||\R|\ell}{|\mathcal{F}|+|\R|-n+r} \ .
  \end{equation}
  We say that the code $\code$ is an \emph{$(h,d)$-MSR} code if
  \begin{equation*}
    \max_{\substack{|\mathcal{F}| = h, |\R|=d \\ \mathcal{F} \cap \R = \emptyset}} M(\code,\mathcal{F},\R) = \frac{|\mathcal{F}||\R|\ell}{|\mathcal{F}|+|\R|-n+r} \ .
  \end{equation*}
  If $h=1$ we say the code is a $d$-MSR code.
\end{definition}

\begin{definition}[Locally $(h,d)$-MSR PMDS array codes]\label{def:locallyMSR}
  Let $\code$ be a $\pmds(\mu n,\mu,n,r,s;\ell)$ code with partition $\W$. We say that the code $\code$ is locally $(h,d)$-MSR if $\code |_{W_i}$ is an $(h,d)$-MSR code for all $i\in [\mu]$.
  If $h=1$ we say the code is a locally $d$-MSR PMDS code.
\end{definition}

\begin{figure}
  \centering
    \resizebox{\columnwidth}{!}{\def\x{0.5}

\begin{tikzpicture}

\node (S1) at (0,0) [draw,thick,minimum width=\x*0.75cm,minimum height=\x*6.5cm] {};
\node (S2)  [right=\x*0.3cm of S1, draw,thick,minimum width=\x*0.75cm,minimum height=\x*6.5cm] {};
\node (S3)  [right=\x*0.3cm of S2, draw,thick,minimum width=\x*0.75cm,minimum height=\x*6.5cm] {};
\node (S32)  [right=\x*0.3cm of S3, draw,thick,minimum width=\x*0.75cm,minimum height=\x*6.5cm] {};
\node (S4)  [right=\x*0.3cm of S32, draw,thick,minimum width=\x*0.75cm,minimum height=\x*6.5cm] {};


\node (S5)  [right=\x*0.7cm of S4, draw,thick,minimum width=\x*0.75cm,minimum height=\x*6.5cm] {};
\node (S6)  [right=\x*0.3cm of S5, draw,thick,minimum width=\x*0.75cm,minimum height=\x*6.5cm] {};
\node (S7)  [right=\x*0.3cm of S6, draw,thick,minimum width=\x*0.75cm,minimum height=\x*6.5cm] {};
\node (S72)  [right=\x*0.3cm of S7, draw,thick,minimum width=\x*0.75cm,minimum height=\x*6.5cm] {};
\node (S8)  [right=\x*0.3cm of S72, draw,thick,minimum width=\x*0.75cm,minimum height=\x*6.5cm] {};


\node (S9)  [right=\x*0.7cm of S8, draw,thick,minimum width=\x*0.75cm,minimum height=\x*6.5cm] {};
\node (S10)  [right=\x*0.3cm of S9, draw,thick,minimum width=\x*0.75cm,minimum height=\x*6.5cm] {};
\node (S11)  [right=\x*0.3cm of S10, draw,thick,minimum width=\x*0.75cm,minimum height=\x*6.5cm] {};
\node (S112)  [right=\x*0.3cm of S11, draw,thick,minimum width=\x*0.75cm,minimum height=\x*6.5cm] {};
\node (S12)  [right=\x*0.3cm of S112, draw,thick,minimum width=\x*0.75cm,minimum height=\x*6.5cm] {};


\draw[dotted,thick] (\x*5.2, \x*3) -- (\x*5.2,-\x*3);
\draw[dotted,thick] (\x*11.1, \x*3) -- (\x*11.1,-\x*3);


\foreach \i in {1,...,12,32,72,112}{
  \foreach \j in {0,1,2,3}{
  \draw ($(S\i)+(-\x*0.25,\x*3.15-\x*\j*0.4)$) rectangle ($(S\i)+(+\x*0.25,\x*2.85-\x*\j*0.4)$) node (C\i) {};}
  \foreach \j in {0,1,2,3}{
  \draw ($(S\i)+(-\x*0.25,\x*1.35-\x*\j*0.4)$) rectangle ($(S\i)+(+\x*0.25,\x*1.05-\x*\j*0.4)$) node (C\i) {};}
  \foreach \j in {0,1,2,3}{
  \draw ($(S\i)+(-\x*0.25,-1.65*\x-\j*0.2)$) rectangle ($(S\i)+(+\x*0.25,-1.95*\x-\j*0.2)$) node (C\i) {};}
  \node (C\i) at ($(S\i)+(0,\x*-0.7)$) [minimum width=\x*0.5cm,minimum height=\x*0.2cm,rounded corners=1pt] {$\vdots$};

}

\draw[dashed, rounded corners = 1pt, blue, thick] ($(S1)+(-\x*.5,\x*3.3)$) rectangle ($(S12)+(\x*.5,\x*2.7)$);


\node[anchor = south east] (L1) at ($(S12)+(\x*1,\x*4.5)$) {\footnotesize PMDS codeword};
\path ($(L1)+(\x*2.1,-\x*0.5)$) edge[bend left, -{Latex[length=1mm,width=0.8mm]}]  ($(S12)+(\x*.7,\x*3.3)$) ;

\draw[dashed, rounded corners = 1pt, orange, thick] ($(S1)+(-\x*.5,\x*3.3)$) rectangle ($(S4)+(\x*.5,\x*1.5)$);
\node[anchor = south east] (L2) at ($(S4)+(\x*1,\x*4.4)$) {\footnotesize Ye-Barg codeword};
\path ($(L2)+(\x*2.1,-\x*0.5)$) edge[bend left, -{Latex[length=1mm,width=0.8mm]}]  ($(S4)+(\x*0.55,\x*2.3)$) ;

\draw [decorate,decoration={brace,amplitude=2pt}]  ($(S1)+(-\x*0.7,\x*1.5)$) -- ($(S1)+(-\x*0.7,\x*3.3)$) node [black,midway,xshift=-0.25cm] {\footnotesize $\ell$ };

\draw[thick,-{Latex[length=2mm,width=1mm]},red] ($(S2)+(0,\x*3)$) -- ($(S2)+(-\x*0.2,-\x*0.4)$) -- ($(S2)+ (\x*0.2,\x*0.4)$) -- ($(S2)+(0,-\x*3)$);
\draw[thick,-{Latex[length=2mm,width=1mm]},red] ($(S7)+(0,\x*3)$) -- ($(S7)+(-\x*0.2,-\x*0.4)$) -- ($(S7)+ (\x*0.2,\x*0.4)$) -- ($(S7)+(0,-\x*3)$);
\draw[thick,-{Latex[length=2mm,width=1mm]},red] ($(S9)+(0,\x*3)$) -- ($(S9)+(-\x*0.2,-\x*0.4)$) -- ($(S9)+ (\x*0.2,\x*0.4)$) -- ($(S9)+(0,-\x*3)$);
\draw[thick,-{Latex[length=2mm,width=1mm]},red] ($(S10)+(0,\x*3)$) -- ($(S10)+(-\x*0.2,-\x*0.4)$) -- ($(S10)+ (\x*0.2,\x*0.4)$) -- ($(S10)+(0,-\x*3)$);
\draw[thick,-{Latex[length=2mm,width=1mm]},red] ($(S11)+(0,\x*3)$) -- ($(S11)+(-\x*0.2,-\x*0.4)$) -- ($(S11)+ (\x*0.2,\x*0.4)$) -- ($(S11)+(0,-\x*3)$);
\draw[thick,-{Latex[length=2mm,width=1mm]},red] ($(S4)+(0,\x*3)$) -- ($(S4)+(-\x*0.2,-\x*0.4)$) -- ($(S4)+ (\x*0.2,\x*0.4)$) -- ($(S4)+(0,-\x*3)$);
\draw[thick,-{Latex[length=2mm,width=1mm]},red] ($(S112)+(0,\x*3)$) -- ($(S112)+(-\x*0.2,-\x*0.4)$) -- ($(S112)+ (\x*0.2,\x*0.4)$) -- ($(S112)+(0,-\x*3)$);


\node at ($(S6)+(\x*1.1,\x*4.7)$) {Servers};
\node at ($(S2)+(\x*1.1,\x*3.8)$) {\footnotesize Local group $1$};
\node at ($(S6)+(\x*1.1,\x*3.8)$) {\footnotesize Local group $2$};
\node at ($(S10)+(\x*1.1,\x*3.8)$) {\footnotesize Local group $3$};


\node (El) at ($(S6)+(\x*0.6,-\x*4.3)$) {\color{red} Server Failures};
\path ($(El.north west)+(\x*0,\x*-0.6)$) edge[bend left=5, -{Latex[length=1mm,width=0.8mm]}]  ($(S2)+(\x*0.1,-\x*3.4)$) ;
\path ($(El.north west)+(\x*0,\x*-0.2)$) edge[bend left=5, -{Latex[length=1mm,width=0.8mm]}]  ($(S4)+(\x*0.1,-\x*3.4)$) ;
\path ($(El.north) + (\x*0.5,0)$) edge[bend right=5, -{Latex[length=1mm,width=0.8mm]}]  ($(S7)+(-\x*0,-\x*3.4)$) ;
\path ($(El.north east)+ (\x*0,-\x*0.3)$) edge[bend right=5, -{Latex[length=1mm,width=0.8mm]}]  ($(S9)+(\x*0.1,-\x*3.4)$) ;
\path ($(El.north east)+ (\x*0,-\x*0.4)$) edge[bend right=5, -{Latex[length=1mm,width=0.8mm]}]  ($(S10)+(\x*0.1,-\x*3.4)$) ;
\path ($(El.north east)+ (\x*0,-\x*0.5)$) edge[bend right=5, -{Latex[length=1mm,width=0.8mm]}]  ($(S11)+(\x*0.1,-\x*3.4)$) ;
\path ($(El.north east)+(\x*0,\x*-0.6)$) edge[bend right=5, -{Latex[length=1mm,width=0.8mm]}]  ($(S112)+(-\x*0.1,-\x*3.4)$) ;

\end{tikzpicture}

  \caption{Illustration of locally regenerating PMDS array codes as constructed in this work, with $n=5$, $\mu=3$ and each symbol of the code alphabet represented by a small rectangle. The shown erasure pattern can be corrected by an $(r=2,s=2)$-PMDS code. }
  \label{fig:illustration}
\end{figure}
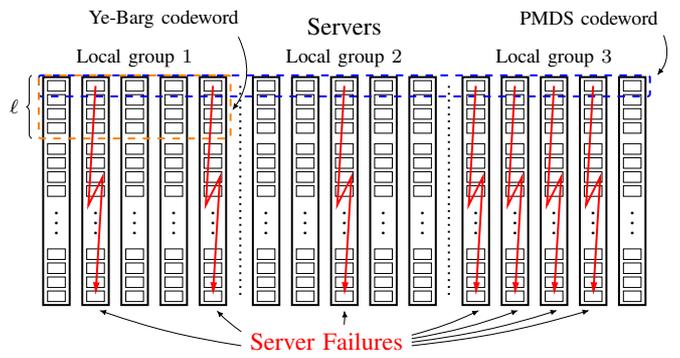
Figure \ref{fig:illustration} shows an illustration of a locally regenerating PMDS array code. Assuming it to be an $(r=2,s=2)$-PMDS code, the erasures in the first local code can be corrected locally, but without taking advantage of the regenerating property, as the number of available helper nodes is only $n-r$. The erasure in the second local code can be corrected from the remaining $n-r+1$ nodes in the local group using the locally regenerating property, and the erasures in the third local code can be recovered by accessing nodes of the other local groups. Note that the example was chosen specifically to illustrate these different cases, the case of a single erasure in a local code, for which the locally regenerating property decreases the repair bandwidth, is far more likely than the other cases.

\subsection{Ye-Barg Regenerating Codes}

We repeat \cite[Construction~2]{ye2017optimalRepair} in the slightly different notation which will be used in this work. 

\begin{definition}[Ye-Barg $d$-MSR codes {\cite[Construction~2]{ye2017optimalRepair}}] \label{def:yeBarg}
  Let $\code \subset \F_q$ be an $[n,n-r;\ell]$ array code over $\F_q$, where $q\geq b n$ and $b=d+1-n+r$. Let $\{\beta_{i,j} \}_{i\in [n], j\in [b]}$ be a set of $bn$ distinct elements of $\F_q$. Then each codeword is an array with $\ell = b^{n}$ rows and $n$ columns, where the $i$-th row fulfills the parity check equations
  \begin{equation*}
    \Ha =
      \begin{bsmallmatrix}
        1&1& \hdots & 1 \\
        \beta_{1,a_1} & \beta_{2,a_2} & \hdots & \beta_{n,a_{n}}\\
        \vphantom{\int\limits^x}\smash{\vdots} & \vphantom{\int\limits^x}\smash{\vdots} & & \vphantom{1}\smash{\vdots} \\
        \beta_{1,a_1}^{r-1} & \beta_{2,a_2}^{r-1} & \hdots & \beta_{n,a_{n}}^{r-1}
      \end{bsmallmatrix}\ ,
  \end{equation*}
  for $a \in [0,\ell-1]$ and $a = \sum_{i = 1}^{n} a_i b^{i-1}$.
\end{definition}
\begin{remark}
  The constructions presented in Section~\ref{sec:constructionS2} and~\ref{sec:constructionAny} can also be applied to obtain locally $(h,d)$-MSR PMDS codes, where each local code is an $(h,d)$-MSR code as in \cite[Construction~3]{ye2017optimalRepair}, which is very similar in structure to \cite[Construction~2]{ye2017optimalRepair} given in Definition~\ref{def:yeBarg}. However, as the required subpacketization is larger for the former, we focus on $d$-MSR codes in this work. 
\end{remark}
\begin{remark}
  In Definition~\ref{def:yeBarg} we define each row of the array code by a set of parity check equations independent of the other $\ell-1$ rows of the array. Note that this is not possible for array codes in general. However, for the existence of such a description it is sufficient that the matrices $A_i$, as defined in \cite{ye2017optimalRepair}, are diagonal matrices. This simplifies the notation for the cases considered in this work, as this notation makes it obvious that each row is an $[n,n-r]$ RS code, and thereby MDS.
\end{remark}

\section{Regenerating PMDS and Sector-Disk codes with Two Global Parities} \label{sec:constructionS2}

We construct array codes from the PMDS and SD codes of \cite{blaum2016construction} using the ideas of \cite{ye2017optimalRepair} to obtain locally $d$-MSR PMDS codes.

\subsection{Generalization of known PMDS construction} \label{sec:BlaumSDandPMDS}

 To apply the ideas of \cite{ye2017optimalRepair} when constructing locally $d$-MSR PMDS and SD codes we need the local codes to be RS codes with specific code locators. The construction of PMDS codes given in \cite{blaum2016construction} has the property that the local codes are RS codes, but the code locators are fixed to be the first $n$ powers of some element $\beta$ of sufficient order. We generalize this construction to allow for different choices of code locators for the local codes.

  Let $\beta \in \F_{2^w}$ be an element of order $\order(\beta) \geq \mu N$. The $[\mu n,\mu(n-r)-2]$ code $\code(\mu ,n,r,2,\mathcal{L},N)$ is given by the $(r\mu +2)\times \mu n$ parity-check matrix 
  \begin{align*}
   \H =
   \begin{bsmallmatrix}
      \H_0 & \mathbf{0} &\hdots & \mathbf{0} \\
      \mathbf{0} & \H_0 &\hdots & \mathbf{0} \\
      \vdots & \vdots &\ddots & \cdots \\
      \mathbf{0}& \mathbf{0} &\hdots & \H_0 \\
      \H_1 & \H_2 &\hdots & \H_{\mu } \\
    \end{bsmallmatrix} 
  \end{align*}
  where
  \begin{align*}
    \H_0  =
 \begin{bsmallmatrix}
     1 & 1 & \hdots & 1 \\
     \beta^{i_1} & \beta^{i_2} & \hdots & \beta^{i_{n}} \\
     \beta^{2i_1} & \beta^{2i_2} & \hdots & \beta^{2i_{n}}\\
     \vdots & \vdots & \ddots & \vdots \\
     \beta^{(r-1)i_1} & \beta^{(r-1)i_2} & \hdots & \beta^{(r-1)i_{n}}
    \end{bsmallmatrix}
  \end{align*}
  for $\L = \{i_1,i_2,...,i_{n}\}$ and, for $0\leq j\leq \mu-1$,
  \begin{align*}
    \H_{j+1} =
     \begin{bsmallmatrix}
      \beta^{ri_1} & \beta^{ri_2} &  \hdots & \beta^{rn} \\
      \beta^{-jN-i_1} & \beta^{-jN-i_2} & \hdots & \beta^{-jN-i_{n}}
    \end{bsmallmatrix} \ .
  \end{align*}

Note that this generalization includes both \cite[Construction~A]{blaum2016construction} and \cite[Construction~B]{blaum2016construction} as special cases
\begin{equation*}
\code_A(\mu ,n,r,2,\{0,1,..., n-1\},n)
\end{equation*}
and
\begin{equation*}
\code_B(\mu ,n,r,2,\{0,1,..., n-1\},N_B)
\end{equation*}
for $N_B = (r+1)(n-1-r)+1$, respectively.

We now derive a general, sufficient condition on $N$, based on the set $\L$, such that the code is a PMDS code.

\begin{lemma} \label{lem:PMDSgeneral}
  Let $\L$ be any set of non-negative integers with $|\L| = n$, then the code $\code(\mu ,n,r,2,\L,N)$ is a PMDS code for any $N \geq (r+1) (\max_{i \in \L}i-r)+1$
\end{lemma}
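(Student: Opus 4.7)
The plan is to verify both conditions of Definition~\ref{def:pmds} for $\code(\mu, n, r, 2, \L, N)$: that each local code is MDS, and that any erasure pattern consisting of $r$ erasures per local group plus two additional erasures can be corrected.

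For the local MDS property, I would observe that the local parity-check matrix $\H_0$ is an $r \times n$ Vandermonde matrix with locators $\{\beta^{i} : i \in \L\}$. Distinctness of these locators follows from $\O(\beta)$ being sufficiently large (implicit in the hypothesis $\O(\beta) \geq \mu N$ together with the bound on $N$), so each $\code|_{W_j}$ is an $[n, n-r, r+1]$ Reed--Solomon code, hence MDS.

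For the PMDS property, I would fix arbitrary $E_j \subset W_j$ with $|E_j| = r$ and two further positions $p_1, p_2$, which may be assumed disjoint from $\bigcup_j E_j$. It suffices to show that the submatrix $M := \H|_U$, where $U := (\bigcup_j E_j) \cup \{p_1, p_2\}$, is nonsingular; note $|U| = r\mu + 2$ matches the number of rows of $\H$. I would split into \textbf{Case I}, where $p_1, p_2$ lie in distinct local groups, and \textbf{Case II}, where both lie in the same group, and handle them in parallel. In both cases the top $r\mu$ rows of $M$ have a block-diagonal structure: for each $j$, the $r$ columns indexed by $E_j$ form an invertible $r \times r$ Vandermonde block $V_j$ in the $j$-th block-row, while any extra column in group $j$ has a Vandermonde-type top part lying in the column span of $V_j$. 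Applying column operations via Lagrange interpolation (with nodes $\{\beta^{i} : i \in E_j\}$) would clear the top entries of the extra columns, so that $\det M$ factors as $\prod_j \det(V_j)$, which is nonzero, times a $2 \times 2$ determinant formed from the appropriately modified entries of the bottom two rows.

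The main obstacle will be showing this residual $2 \times 2$ determinant is nonzero. Each of its entries is an explicit Laurent polynomial in $\beta$ whose monomials have exponents that are integer combinations of the indices in $\L$ together with shifts $-jN$ inherited from the corresponding local group. The bound $N \geq (r+1)(\max_{i \in \L} i - r) + 1$ is chosen precisely so that the total spread of the exponents contributed by the $\L$-indices is strictly less than $N$, while the shifts $-jN$ differ by multiples of $N$ between distinct groups. Hence the exponents of the dominant monomials in the expanded determinant fall into disjoint residue classes modulo $\O(\beta) \geq \mu N$, ruling out any cancellation and forcing the determinant to be a nonzero element of the field. This is a direct generalization of the PMDS argument for Construction~B in \cite{blaum2016construction}, with the hard-coded range $\{0, 1, \ldots, n-1\}$ replaced by the arbitrary index set $\L$ and the quantity $n-1$ replaced by $\max_{i \in \L} i$.
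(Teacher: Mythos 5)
Your proposal follows essentially the same route as the paper: both reduce, exactly as in the proofs of Theorems 5 and 7 of Blaum et al., to the condition $Nz + \sum_u j_u' - \sum_u j_u \not\equiv 0 \pmod{\O(\beta)}$ for the two-group case (and Vandermonde nonsingularity for the single-group case), and then verify it by bounding the sum of any $r+1$ distinct elements of $\L$ between $\tfrac{r(r+1)}{2}$ and $(r+1)(\max_{i\in\L} i - r) + \tfrac{r(r+1)}{2}$, which is precisely how the hypothesis on $N$ enters. The only difference is presentational: you spell out the block-diagonal determinant reduction that the paper delegates to the citation.
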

\begin{proof}
We follow the proof of \cite[Theorem~5]{blaum2016construction}. Assume $r$ positions in each local group (row of the PMDS code) have been erased and in addition there are $2$ random erasures. If the two erasures occur in the same local group $z$, all local groups except for this one will be corrected by the local codes. Since all points in $\L$ are distinct, by the same argument as in \cite{blaum2016construction}, the matrix corresponding to the erased positions is a Vandermonde matrix and the erasures can be corrected.\\
Now consider the case of two local groups (horizontal codes) with $r+1$ erasures each, where the erased positions are given by $\{j_1,...,j_{r+1}\} \subset \L$ and $\{j_1',...,j_{r+1}'\} \subset \L$, respectively. By the same arguments as in \cite[Theorem~7]{blaum2016construction} we obtain, that such an erasure pattern is correctable if
  \begin{equation*}
    Nz + \sum_{u=1}^{r+1} j_u' - \sum_{u=1}^{r+1} j_u' \neq 0 \mod \order(\beta) \ .
  \end{equation*}
  Since all $j_u$ are distinct, we have
  \begin{equation*}
    \frac{r(r+1)}{2} = \sum_{u=0}^{r} u \leq \sum_{u=1}^{r+1} j_u
  \end{equation*}
  and
  \begin{align*}
    \sum_{u=1}^{r+1}j_u &\leq \sum_{u=0}^{r}(\max_{ j\in \L}j -r)+u \\
                    &= (r+1)(\max_{j\in \L}j-r) + \sum_{u=0}^{r} u = N-1 + \frac{r(r+1)}{2} \ .
  \end{align*}
  The remaining steps are exactly the same as in \cite[Theorem~7]{blaum2016construction} and we obtain
  \begin{align*}
   1 \leq Nz+ \sum_{u=1}^{r+1} j_u' - \sum_{u=1}^{r+1} j_u \leq N\mu -1 < \order(\beta)
  \end{align*}
  and the lemma statement follows.
\end{proof}

By similar arguments we also give a general, sufficient condition on $N$ for the code to be an SD code.
\begin{lemma} \label{lem:SDgeneral}
  Let $\L$ be any set of non-negative integers with $|\L| = n$, then the code $\code(\mu ,n,r,2,\L,N)$ is an SD code for any $N \geq \max_{j \in \L} j+1$
\end{lemma}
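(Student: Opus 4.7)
The plan is to parallel the proof of Lemma~\ref{lem:PMDSgeneral}, exploiting the stronger SD constraint ($E_1 = \cdots = E_\mu$) to obtain a weaker requirement on $N$. As in that proof, I would split the erasure pattern into two cases according to whether the two extra erasures fall in the same local group or in two distinct ones, after accounting for the $r$ shared erasures at positions $E$ in every local group.

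The first case (both extra erasures in a single group $z$) is handled verbatim as in Lemma~\ref{lem:PMDSgeneral}: the remaining groups are decoded by their local MDS codes, and the $(r+2) \times (r+2)$ submatrix for group $z$ reduces (after factoring out a common $\beta^{-(z-1)N}$ from the last row and $\beta^{-j_u}$ from each column) to a Vandermonde matrix in the distinct elements $\{\beta^{j_u} : j_u \in \L\}$. Distinctness holds since $\max_{j \in \L} j < N \leq N\mu \leq \order(\beta)$. No new argument is needed here.

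The non-routine case is the second one: two extra erasures in distinct groups $z_1 \neq z_2$. Applying the computation of \cite[Theorem~7]{blaum2016construction} that underlies the proof of Lemma~\ref{lem:PMDSgeneral}, correctability reduces to the condition
\begin{equation*}
N(z_1 - z_2) + \sum_{u=1}^{r+1} j_u^{(2)} - \sum_{u=1}^{r+1} j_u^{(1)} \not\equiv 0 \pmod{\order(\beta)},
\end{equation*}
where $\{j_u^{(1)}\}, \{j_u^{(2)}\} \subset \L$ index the erased positions in groups $z_1, z_2$. Here the SD hypothesis is crucial: the two sets share exactly the $r$ positions of $E$ and differ only in a single element each, say $j^\star, j^{\star\star} \in \L$. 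The $r$ shared terms cancel, and the condition collapses to
\begin{equation*}
N(z_1 - z_2) + (j^{\star\star} - j^\star) \not\equiv 0 \pmod{\order(\beta)}.
\end{equation*}

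To finish, I would take WLOG $z := z_1 - z_2 \in [1,\mu-1]$ and use the hypothesis $N \geq \max_{j \in \L} j + 1$ together with $\order(\beta) \geq \mu N$ to sandwich $Nz + (j^{\star\star} - j^\star)$ strictly between $1$ and $N\mu - 1 < \order(\beta)$, so it is nonzero modulo $\order(\beta)$. The main conceptual step, if there is one, is recognizing that the SD constraint produces exactly this cancellation of $r$ summands, which is why the required bound on $N$ depends only on a single index $\max_{j \in \L} j$ rather than on a sum of $r+1$ indices of $\L$ as in the PMDS case of Lemma~\ref{lem:PMDSgeneral}.
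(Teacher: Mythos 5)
Your proof is correct and takes essentially the same approach as the paper's: both reduce the cross-group case to the single condition $N(z'-z)+j'-j \not\equiv 0 \pmod{\order(\beta)}$ (you by cancelling the $r$ shared column indices in the PMDS sum condition, the paper by directly invoking the invertibility of $\beta^{-j}+\beta^{-N(z-z')-j'}$ from \cite{blaum2016construction}), and then apply the identical sandwich bound $1 \leq N(z'-z)+j'-j \leq N\mu-1 < \order(\beta)$ using $0 \leq j,j' \leq N-1$.
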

\begin{proof}
  The case of $r+2$ erasures in the same local group (horizontal code) is the same as in Lemma~\ref{lem:PMDSgeneral} and \cite[Theorem~5/7]{blaum2016construction}.  Now consider the case of $r$ column erasures in positions $j_1,...,j_{r} \in \L$ and two random erasures in distinct rows $z$ and $z'$ in positions $j,j' \in \L \setminus \{j_1,...,j_{r}\}$. We assume without loss of generality that $z < z'$. By the same argument as in \cite{blaum2016construction} we need to show that $\beta^{-j} + \beta^{-N(z-z')-j'}$ is invertible. With $1\leq z,z' \leq \mu$ and $0\leq j,j' \leq N-1$ we get
  \begin{align*}
    N(z'-z)+j'-j \geq N +j'-j \geq N - (N-1) > 0
  \end{align*}
  and
  \begin{align*}
    N(z'-z)+j'-j \leq N(\mu -1) + N-1 = N\mu  -1 < \order(\beta) \ .
  \end{align*}
  Combining these we get $1\leq N(z'-z)+j'-j \leq N\mu -1$, so
  \begin{align*}
    N(z'-z)+j'-j \neq 0 \mod \order(\beta)
  \end{align*}
  and it follows that $\beta^{-j} \oplus \beta^{-N(z-z')-j'}$ is invertible.
\end{proof}

With these generalizations of \cite[Construction~A/B]{blaum2016construction} we are now ready to construct PMDS and SD codes, where each local code is a $d$-MSR code.
\begin{construction}[Locally $d$-MSR PMDS/SD array codes]\label{con:SDcodesLocalRegeneration}
Let $s=2$ and $q,\mu,n,r,d,N \in \mathbb{Z}_{>0}$ be positive integers with
\begin{itemize}
    \item $r \leq n$
    \item $q$ a power of $2$
    \item $q \geq \max\{\mu N, bn\}$, where $b=d+1-(n-r)$
    \item $\ell = b^n$
\end{itemize}
 For an element $\beta \in \F_q$ with $\order(\beta) \geq \max\{\mu N,nb\}$ denote $\beta_{i,j} = \beta^{i-1+(j-1)n}, 1\leq i\leq n, 1\leq j\leq b$.

We define the following $[\mu n, \mu(n-r)-2; \ell]_{q^M}$ array code $\code(\mu ,n,r,2,N,d;\ell)_{q}$ as
\begin{align*}
\left\{
\C = \begin{bsmallmatrix}
\c^{(0)} \\
\c^{(1)}\\
\vdots 	   \\
\c^{(\ell-1)} 
\end{bsmallmatrix} \in \F_{q}^{\ell \times \mu n} \, : \, \H^{(a)} \c^{(a)} = \0 \, \forall \, a=0,\dots,\ell-1
\right\},
\end{align*}
The matrix $\H^{(a)}$ is defined as
  \begin{align*}
    \H^{(a)} =
    \begin{bsmallmatrix}
      \H_0^{(a)} & \mathbf{0} &\hdots & \mathbf{0} \\
      \mathbf{0} & \H_0^{(a)} &\hdots & \mathbf{0} \\
      \vdots & \vdots &\ddots & \cdots \\
      \mathbf{0}& \mathbf{0} &\hdots & \H_0^{(a)} \\
      \H_1^{(a)} & \H_2^{(a)} &\hdots & \H_{\mu }^{(a)} \\
    \end{bsmallmatrix} \in \F_{q}^{r \mu +2 \times \mu n} \ ,
    \end{align*}
    where
    \begin{align}\label{eq:parityCheckLocalS2}
      \H_0^{(a)}  =
      \begin{bsmallmatrix}
        1&1& \hdots & 1 \\
        \beta_{1,a_1} & \beta_{2,a_2} & \hdots & \beta_{n,a_{n}}\\
        \vphantom{\int\limits^x}\smash{\vdots} & \vphantom{\int\limits^x}\smash{\vdots} & & \vphantom{1}\smash{\vdots} \\
        \beta_{1,a_1}^{r-1} & \beta_{2,a_2}^{r-1} & \hdots & \beta_{n,a_{n}}^{r-1} 
      \end{bsmallmatrix} \in \F_{q}^{r \times n} \ ,
  \end{align}
  with $a \in [0,\ell-1]$ and $a = \sum_{i = 1}^{n} a_{i} b^{i-1}$.
   For $0\leq j\leq \mu -1$ let
  \begin{align*}
    \H_{j+1}^{(a)} =
\begin{bsmallmatrix}
      \beta_{1,a_1}^r & \beta_{2,a_2}^r & \hdots & \beta_{n,a_{n}}^r \\
      \beta^{-jN} \beta_{1,a_1}^{-1} & \beta^{-jN} \beta_{2,a_2}^{-1}  & \hdots & \beta^{-jN} \beta_{n,a_{n}}^{-1}
    \end{bsmallmatrix} \in \F_{q}^{2 \times n} \ .
  \end{align*}
\end{construction}

It remains to show that the local codes are MSR codes and the conditions under which the code is a PMDS or SD code. 
\begin{theorem}\label{thm:PMDSMSR}
Let $\mu,n,r,d$ be fixed and $q \geq \max\{\mu N, bn\}$ with
  \begin{equation*}
    N=(r+1)(rn-1-r)+1 \ .
  \end{equation*}
  Then the code $\code(\mu ,n,r,2,N,d;\ell)_{q}$ as in Construction~\ref{con:SDcodesLocalRegeneration} is a locally $d$-MSR $\pmds(\mu n,\mu,n,r,2;b^{n})$ code over $\F_q$, as in Definition~\ref{def:locallyMSR}.
\end{theorem}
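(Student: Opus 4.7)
The plan is to establish the two parts of Theorem~\ref{thm:PMDSMSR} separately by exploiting the row-by-row structure of $\code(\mu,n,r,2,N,d;\ell)_{q}$. By Construction~\ref{con:SDcodesLocalRegeneration}, a codeword $\C$ lies in the code if and only if, for every $a \in [0,\ell-1]$, its $a$-th row $\c^{(a)}$ satisfies $\Ha \c^{(a)} = \0$. Consequently, both the local regeneration property and the PMDS property reduce to per-row statements about the matrices $\Ha$ and the sub-block $\H_0^{(a)}$.

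For the local $d$-MSR claim, I would first observe that the puncturing $\code|_{W_i}$ is determined, row by row, only by those parity checks of $\Ha$ that are supported on $W_i$, namely the rows of the $i$-th diagonal block $\H_0^{(a)}$. Since $\beta_{i,j} = \beta^{i-1+(j-1)n}$ are $bn$ distinct elements of $\F_q$ (using $\order(\beta) \geq nb$), the matrix $\H_0^{(a)}$ in \eqref{eq:parityCheckLocalS2} coincides exactly with the parity-check matrix appearing in Definition~\ref{def:yeBarg} for the Ye-Barg code of parameters $(n,r,b)$. Hence each local code is a Ye-Barg $d$-MSR code, which settles the first part.

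For the PMDS claim, the key observation is that for each fixed $a$ the matrix $\Ha$ has exactly the block form of the generalized code $\code(\mu,n,r,2,\L^{(a)},N)$ from Section~\ref{sec:BlaumSDandPMDS}, with locator set
\begin{equation*}
  \L^{(a)} = \{\, (i-1) + a_i n \,:\, i \in [n]\,\},
\end{equation*}
as can be read off by comparing the columns in $\H_0^{(a)}$ and $\H_{j+1}^{(a)}$ against the expressions for $\H_0$ and $\H_{j+1}$ in Section~\ref{sec:BlaumSDandPMDS}. The elements of $\L^{(a)}$ are non-negative and distinct, with $|\L^{(a)}| = n$. Lemma~\ref{lem:PMDSgeneral} then applies whenever $N \geq (r+1)(\max_{i \in \L^{(a)}} i - r)+1$. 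I would bound $\max_{i \in \L^{(a)}} i \leq (n-1) + (b-1)n = bn-1$ and, using $b = d+1-(n-r) \leq r$ (which follows from $d \leq n-1$), conclude that the choice $N=(r+1)(rn-1-r)+1$ suffices uniformly in $a$. Since Lemma~\ref{lem:PMDSgeneral} then gives the PMDS erasure-correction property for $\Ha$ for every $a$, and the PMDS condition of Definition~\ref{def:pmds} is a per-row statement about the array code, the full PMDS claim follows.

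The main obstacle will be the bookkeeping in the previous paragraph: pinning down the correct $\L^{(a)}$ from the indexing $\beta_{i,a_i}$ used in Construction~\ref{con:SDcodesLocalRegeneration}, verifying that $\L^{(a)} \subset \mathbb{Z}_{\geq 0}$, and bounding its maximum element uniformly in $a$. Once this identification between Construction~\ref{con:SDcodesLocalRegeneration} and the generalized construction of Section~\ref{sec:BlaumSDandPMDS} is made, Lemma~\ref{lem:PMDSgeneral} does the remaining work. The field size hypothesis $q \geq \max\{\mu N,\, bn\}$ is precisely what guarantees both $\order(\beta) \geq \mu N$ (required by Lemma~\ref{lem:PMDSgeneral}) and $\order(\beta) \geq nb$ (required for distinctness of the $\beta_{i,j}$ in the Ye-Barg reduction), so no further condition on $q$ is needed.
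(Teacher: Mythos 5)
Your proposal is correct and follows essentially the same route as the paper: identify each local code with a Ye--Barg code via the distinctness of the $\beta_{i,j}$, and identify each row of the array with a generalized code $\code(\mu,n,r,2,\L^{(a)},N)$ so that Lemma~\ref{lem:PMDSgeneral} applies. Your explicit bound $\max_{i\in\L^{(a)}} i \leq bn-1 \leq rn-1$ via $b\leq r$ is in fact slightly more careful than the paper, which asserts the bound $rn-1$ directly.
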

\begin{proof}
  First, note that the $\beta_{i,j}$ in Construction~\ref{con:SDcodesLocalRegeneration} are the (distinct) elements $\beta^0,\beta^1,...,\beta^{rn-1}$. Now consider the $j$-th local group. The $a$-th row fulfills the parity check equations given in (\ref{eq:parityCheckLocalS2}) and since all elements $\beta_{i,j}$ are distinct, it is immediate that the local group is an $[n,n-r; b^{n}]$ Ye-Barg code as in Definition~\ref{def:yeBarg}.

  For the PMDS property, observe that the $a$-th row, i.e., the row fulfilling the parity-check equations $\H^{(a)}$, is a code $\code(\mu,n,r,2,\L^{(a)},N)$ as in Section~\ref{sec:BlaumSDandPMDS}, where $\L^{(a)} = \{i-1+(a_i-1)n \ | \ i \in [n]\}$ by definition of the $\beta_{i,j}$. For any $a$ it holds that
  \begin{align*}
    \max_{i \in \L^{(a)}} i \leq \max_{\substack{i \in \L^{(a)}\\a\in [0,\ell-1]}} i = rn -1 \ .
  \end{align*}
  By Lemma~\ref{lem:PMDSgeneral} a code as in Section~\ref{sec:BlaumSDandPMDS} is PMDS if $N \geq (r+1)(\max_{i\in \L}i -r) +1$ and the lemma statement follows.
\end{proof}

\begin{theorem}\label{thm:SDMSR}
  Let $\mu,n,r,d$ be fixed and $q \geq \max\{rn\mu, bn\}$. Then the code $\code(\mu ,n,r,2,rn,d;\ell)_{q}$ as in Construction~\ref{con:SDcodesLocalRegeneration} is a locally $d$-MSR $\sd(\mu n,\mu,n,r,s;b^{n})$ code over $\F_q$.
\end{theorem}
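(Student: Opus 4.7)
The plan is to mirror the proof of Theorem~\ref{thm:PMDSMSR} and substitute Lemma~\ref{lem:SDgeneral} for Lemma~\ref{lem:PMDSgeneral} at the last step. As in the PMDS case, there are two independent properties to verify: that each local code $\code|_{W_j}$ is a $d$-MSR code, and that the global code satisfies the sector-disk recovery property of Remark~\ref{rem:SDcodes}.

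First I would show the local $d$-MSR property by exactly the same argument as in Theorem~\ref{thm:PMDSMSR}. Since $\O(\beta) \geq nb$, the elements $\beta_{i,j} = \beta^{i-1+(j-1)n}$ are pairwise distinct powers of $\beta$, and for every $a \in [0,\ell-1]$ the local block $\H_0^{(a)}$ in~(\ref{eq:parityCheckLocalS2}) has precisely the form of the parity-check matrix in Definition~\ref{def:yeBarg}. Hence every $\code|_{W_j}$ is an $[n,n-r;b^n]$ Ye--Barg $d$-MSR code.

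Next I would establish the SD property row by row. For each $a \in [0,\ell-1]$, set $\L^{(a)} = \{i-1+(a_i-1)n : i \in [n]\}$. Then the parity-check matrix $\H^{(a)}$ coincides with that of a code $\code(\mu,n,r,2,\L^{(a)},N)$ from the family of Section~\ref{sec:BlaumSDandPMDS}, with $N = rn$. As observed in the proof of Theorem~\ref{thm:PMDSMSR}, the elements $\beta_{i,a_i}$ appearing in~(\ref{eq:parityCheckLocalS2}) all lie in $\{\beta^0,\beta^1,\dots,\beta^{rn-1}\}$, so $\max_{i \in \L^{(a)}} i \leq rn-1$. Consequently $N = rn \geq \max_{i \in \L^{(a)}} i + 1$, and Lemma~\ref{lem:SDgeneral} yields the SD property for this row. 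Combining this with the local $d$-MSR property gives the claimed locally $d$-MSR $\sd(\mu n,\mu,n,r,s;b^n)$ code.

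There is no real obstacle here; the proof is a direct specialization of the PMDS template. The point worth emphasizing is that Lemma~\ref{lem:SDgeneral} requires only $N \geq \max_{i \in \L} i + 1$, a strictly weaker hypothesis than the PMDS bound $(r+1)(\max_{i \in \L} i - r) + 1$ of Lemma~\ref{lem:PMDSgeneral}. This is precisely what permits the smaller choice $N = rn$, and hence the smaller field-size requirement $q \geq \max\{rn\mu, bn\}$, compared with Theorem~\ref{thm:PMDSMSR}.
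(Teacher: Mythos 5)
Your proposal is correct and is exactly the argument the paper intends: its own proof of this theorem is the one-line remark that it ``follows immediately from the proof of Theorem~\ref{thm:PMDSMSR} and Lemma~\ref{lem:SDgeneral},'' which is precisely the substitution you carry out in detail. Your closing observation about why the weaker condition $N \geq \max_{i\in\L}i+1$ permits $N=rn$ and the smaller field size matches the paper's intent as well.
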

\begin{proof}
  The proof follows immediately from the proof of Theorem~\ref{thm:PMDSMSR} and Lemma~\ref{lem:SDgeneral}.
\end{proof}

\section{PMDS codes with local regeneration and any number of global parities} \label{sec:constructionAny}

We now consider the more general problem of constructing locally $d$-MSR PMDS and SD codes for any number of local and global parities, based on the construction of PMDS and SD codes in \cite{gabrys2018constructions} and the construction of $d$-MSR codes in \cite{ye2017optimalRepair}. 

\subsection{Recapitulation and Generalization of the PMDS Construction in \cite[Section~III]{gabrys2018constructions}}\label{ssec:recap_PMDS_arbitrary_r_s}

We first rephrase the construction of the PMDS codes of \cite[Section~III]{gabrys2018constructions} in our notation and slightly generalize it.

Let $q$ be a prime power and $M$ be a positive integer. Furthermore, let $\H_0 \in \F_q^{r \times n}$ be a parity-check matrix of an $[n,n-r]_{q}$ MDS code and $\alpha_{i,j}$ be elements of the extension field $\F_{q^M}$ for $i=1,\dots,\mu$ and $j=1,\dots,n$. We write $\Gamma := (\alpha_{1,1},\alpha_{1,2},...,\alpha_{\mu,n})$.
The $[\mu n,\mu(n-r)-s]$ code $\code(\mu ,n,r,s,\H_0,\Gamma)$ is given by the $(r\mu +s)\times \mu n$ parity-check matrix 
  \begin{align*}
   \H =
     \begin{bsmallmatrix}
      \H_0 & \mathbf{0} &\hdots & \mathbf{0} \\
      \mathbf{0} & \H_0 &\hdots & \mathbf{0} \\
      \vdots & \vdots &\ddots & \cdots \\
      \mathbf{0}& \mathbf{0} &\hdots & \H_0 \\
      \H_1 & \H_2 &\hdots & \H_{\mu } \\
    \end{bsmallmatrix} 
  \end{align*}
  where for $1\leq j\leq \mu$,
  \begin{align*}
    \H_{j} =
    \begin{bsmallmatrix}
      \alpha_{j,1} & \alpha_{j,2} &  \hdots & \alpha_{j,n} \\
      \alpha_{j,1}^{q} & \alpha_{j,2}^{q} &  \hdots & \alpha_{j,n}^{q} \\
     \vdots & \vdots & \ddots & \vdots \\
     \alpha_{j,1}^{q^{s-1}} & \alpha_{j,2}^{q^{s-1}} &  \hdots & \alpha_{j,n}^{q^{s-1}} \\
    \end{bsmallmatrix} \ .
  \end{align*}
Whether this code is PMDS/SD depends on the choice of $\Gamma$.
In \cite{gabrys2018constructions} the authors present multiple methods of finding such a sequence and in this work we will focus on their main result given in \cite[Section~IV-A]{gabrys2018constructions}.

The original construction in \cite{gabrys2018constructions} used for $\H_0$ a parity-check matrix of a specific Reed--Solomon code.
It can be seen from the proof of \cite[Lemma~2]{gabrys2018constructions} that this restriction is not necessary in general and that an arbitrary MDS parity-check matrix gives a PMDS code as well.
This slight generalization is necessary for the construction in the next subsection.

\subsection{New Construction of $d$-MSR PMDS Array Codes}

In the following, we construct PMDS array codes, in which each row constitutes a, possibly different, PMDS code as in \cite{gabrys2018constructions} (cf.~Subsection~\ref{ssec:recap_PMDS_arbitrary_r_s}) and the local array codes are Ye--Barg codes \cite{ye2017optimalRepair}.

\begin{construction}\label{con:PMDSany}
Let $q,M,\mu,n,r,s,d$ be positive integers where
\begin{itemize}
    \item $r \leq n$
    \item $s \leq (n-r)\mu$
    \item $q$ a prime power
    \item $q \geq bn$, where $b=d+1-(n-r)$
    \item $\ell = b^n$ .
\end{itemize}
Let $\F_q$ be a finite field and let $\alpha_{i,j} \in \F_{q^M}$ for $i=1,\dots,\mu$ and $j=1,\dots,n$, and write $\Gamma := (\alpha_{1,1},\alpha_{1,2},\dots,\alpha_{\mu,n})$.
Furthermore, let $\beta_{i,j}$ for $i=1,\dots,n$ and $j=1,\dots,b$ be distinct elements of $\F_q$ and write $\mathcal{B} := (\beta_{1,1},\beta_{1,2},\dots,\beta_{n,b})$.

We define the following $[\mu n, \mu(n-r)-s; \ell]_{q^M}$ array code $\code(\mu ,n,r,s,d,\mathcal{B},\Gamma;\ell)_{q^M}$ as
\begin{align*}
\left\{
\C = \begin{bsmallmatrix}
\c^{(0)} \\
\c^{(1)}\\
\vdots 	   \\
\c^{(\ell-1)} 
\end{bsmallmatrix} \in \F_{q^M}^{\ell \times \mu n} \, : \, \H^{(a)} \c^{(a)} = \0 \, \forall \, a=0,\dots,\ell-1
\right\},
\end{align*}
where the matrix $\H^{(a)}$ is defined as
\begin{align*}
    \H^{(a)} :=
    \begin{bsmallmatrix}
      \H_0^{(a)} & \mathbf{0} &\hdots & \mathbf{0} \\
      \mathbf{0} & \H_0^{(a)} &\hdots & \mathbf{0} \\
      \vdots & \vdots &\ddots & \cdots \\
      \mathbf{0}& \mathbf{0} &\hdots & \H_0^{(a)} \\
      \H_1 & \H_2 & \hdots & \H_{\mu }
    \end{bsmallmatrix} \in \F_{q^M}^{r \mu +s \times \mu n} \ ,
  \end{align*}
  and for each $a \in [0,\ell-1]$ with $a = \sum_{i = 1}^{n} a_{i} b^{i-1}$, we have
      \begin{align}\label{eq:PMDSanyLocal}
      \H_0^{(a)}  =
      \begin{bsmallmatrix}
        1&1& \hdots & 1 \\
        \beta_{1,a_1} & \beta_{2,a_2} & \hdots & \beta_{n,a_{n}}\\
        \vphantom{\int\limits^x}\smash{\vdots} & \vphantom{\int\limits^x}\smash{\vdots} & & \vphantom{1}\smash{\vdots} \\
        \beta_{1,a_1}^{r-1} & \beta_{2,a_2}^{r-1} & \hdots & \beta_{n,a_{n}}^{r-1} 
      \end{bsmallmatrix} \in \F_{q}^{r \times n} \ ,
  \end{align}
Further, for $1\leq j \leq \mu$, define
\begin{align*}
\H_{j} :=
\begin{bsmallmatrix}
  \alpha_{j,1} & \alpha_{j,2} &  \hdots & \alpha_{j,n} \\
  \alpha_{j,1}^{q} & \alpha_{j,2}^{q} &  \hdots & \alpha_{j,n}^{q} \\
 \vdots & \vdots & \ddots & \vdots \\
 \alpha_{j,1}^{q^{s-1}} & \alpha_{j,2}^{q^{s-1}} &  \hdots & \alpha_{j,n}^{q^{s-1}} \\
\end{bsmallmatrix} \in \F_{q^M}^{s \times n}.
\end{align*}
\end{construction}
By the choice of the $\beta_{i,j}$, the local codes of the codes given by Construction~\ref{con:PMDSany} are Ye--Barg codes.
As in \cite{gabrys2018constructions}, we need to choose the vector $\Gamma$ in a suitable way to obtain PMDS array codes.

\begin{lemma}\label{lem:PMDS_any}
Let $\alpha_{1,1},\dots,\alpha_{\mu,n}$ be chosen such that any subset of $(r+1)s$ elements of the $\alpha_{i,j}$ is linearly independent over $\F_q$. Then, $\code(\mu ,n,r,s,d,\mathcal{B},\Gamma;\ell)_{q^M}$ from Construction~\ref{con:PMDSany} is a $d$-MSR PMDS array code.
\end{lemma}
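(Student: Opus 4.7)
The plan is to decouple the array code into its rows: since the constraint $\H^{(a)} \c^{(a)} = \0$ is imposed independently for each $a \in [0,\ell-1]$, the array code is locally $d$-MSR PMDS if and only if (i) the stack of local blocks $\H_0^{(a)}$ defines a $d$-MSR code on each $W_i$, and (ii) for every $a$, the scalar code over $\F_{q^M}$ with parity-check matrix $\H^{(a)}$ is PMDS in the sense of Definition~\ref{def:pmds} with $\ell=1$. These two items are what I would verify separately.

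For (i), the matrix $\H_0^{(a)}$ in \eqref{eq:PMDSanyLocal} is exactly the Ye--Barg parity-check matrix of Definition~\ref{def:yeBarg}: its $i$-th column uses the locator $\beta_{i,a_i}$, and the full set $\{\beta_{i,j}\}_{i \in [n], j \in [b]}$ of $bn$ locators consists of distinct elements of $\F_q$ by hypothesis. Varying $a$ over $[0,\ell-1]$ therefore produces precisely the $\ell = b^n$ parity-check matrices of a Ye--Barg $d$-MSR code on each local group $W_i$. Hence $\code|_{W_i}$ is $d$-MSR for every $i \in [\mu]$.

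For (ii), observe that $\H^{(a)}$ has exactly the block structure of the generalized construction $\code(\mu,n,r,s,\H_0,\Gamma)$ recalled in Subsection~\ref{ssec:recap_PMDS_arbitrary_r_s}, with $\H_0 \mapsto \H_0^{(a)}$ and the common global-parity blocks $\H_1,\dots,\H_\mu$ built from $\Gamma$. Since $\H_0^{(a)}$ is the Vandermonde parity-check matrix of an $[n,n-r]$ Reed--Solomon code over $\F_q$ with $n$ pairwise distinct locators, it is in particular an MDS parity-check matrix. By the MDS-generalization of \cite[Lemma~2]{gabrys2018constructions} noted at the end of Subsection~\ref{ssec:recap_PMDS_arbitrary_r_s}, the assumption that every $(r+1)s$ of the $\alpha_{i,j}$ are $\F_q$-linearly independent is exactly what makes the scalar code with parity-check matrix $\H^{(a)}$ a PMDS code, for every $a$. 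Combining (i) and (ii) establishes the lemma.

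The only non-routine step is invoking that MDS-generalization: one must inspect the proof of \cite[Lemma~2]{gabrys2018constructions} and confirm that nowhere is the specific Reed--Solomon structure of $\H_0$ used beyond the MDS property, so that the rank argument (which writes each candidate erasure sub-matrix as a product of an invertible MDS block, coming from $r$ local parities per group and the $n-r-?$ surviving local coordinates, with a Moore-like block in the $\alpha_{i,j}$) carries over verbatim. This is the main obstacle; once it is recorded, the two bullets above finish the proof.
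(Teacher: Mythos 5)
Your proposal is correct and follows essentially the same route as the paper: decompose the array code into its $\ell$ independent rows, observe that the local blocks $\H_0^{(a)}$ assemble into Ye--Barg $d$-MSR codes, and reduce the PMDS property of each row to the generalization of \cite[Lemma~2]{gabrys2018constructions} to arbitrary MDS matrices $\H_0$. The only difference is that the paper carries out in full the ``non-routine step'' you flag --- it explicitly computes the punctured parity-check matrix $\H_\gamma^{(a)}$ and shows each of its entries is an $\F_q$-linear combination of at most $r+1$ of the $\alpha_{i,j}$, so that the $(r+1)s$-independence hypothesis forces any $s$ columns to be independent --- rather than citing the generalization as a black box.
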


\begin{proof}
The proof combines the ideas of \cite[Lemma~2]{gabrys2018constructions}, \cite[Corollary~5]{gabrys2018constructions}, and \cite[Lemma~7]{gabrys2018constructions}.
Let
\begin{align*}
\C = \begin{bsmallmatrix}
\c_1^{(0)} & \c_2^{(0)} & \dots  & \c_\mu^{(0)} \\
\c_1^{(1)} & \c_2^{(1)} & \dots  & \c_\mu^{(1)} \\
\vdots 	   & \vdots     & \ddots & \vdots \\
\c_1^{(\ell-1)} & \c_2^{(\ell-1)} & \dots  & \c_\mu^{(\ell-1)}
\end{bsmallmatrix} \in \F_{q^M}^{\ell \times \mu n}
\end{align*}
be a codeword of $\code(\mu ,n,r,s,\mathcal{L},\delta,\Gamma)$ (here, we group the rows of the codeword in blocks of length $n$, i.e., $\c^{(a)}_i \in \F_{q^M}^{n}$).
By definition, for all $i=1,\dots,\mu$ and $a=0,\dots,\ell-1$, we have
\begin{equation}
\H^{(a)} {\c_i^{(a)}}^\top = \0, \label{eq:any_parities_first_parity_eq}
\end{equation}
where $\H^{(a)}$ is the parity-check matrix of an $[n,n-r]$ MDS code.
Furthermore, with $\alphaVec_{i} := [\alpha_{i,1},\alpha_{i,2},\dots,\alpha_{i,n}]$, we have
\begin{align}
\sum_{i=1}^{\mu} \alphaVec_{i}^{q^j} {\c_i^{(a)}}^\top = 0, \label{eq:any_parities_second_parity_eq}
\end{align}
for all $j=0,\dots,s-1$ and $a=0,\dots,\ell-1$.

Let $S := [\s_1,\dots,\s_\mu]$ be of the form
\begin{align*}
\s_i = [s_{i,1},\dots,s_{i,r}] \in [n]^r, \quad s_{i,1} < s_{i,2} < \dots < s_{i,r}.
\end{align*}
Denote by $\bar{\s}_i$ the vector in $[n]^{n-r}$ that contains, again in increasing order, the entries of $[n]$ that are not contained in $\s_i$.
The positions $\s_i$ correspond to the puncturing patterns $E_i$ in the definition of PMDS array codes (cf.~Definition~\ref{def:pmds}). We need to show that for each such vector $S$, the array code punctured at these positions in each local group, gives an $[n-r\mu,n-r\mu-s;\ell]$ MDS array code.

For a vector $\x$ of length $n$, let $\x_{\s_i}$ and $\x_{\sbar_i}$ be the vectors of length $r$ and $n-r$ containing the entries of $\x$ indexed by the entries of $\s_i$ and $\sbar_i$, respectively.
Furthermore, for a vector $\y = [\y_1,\dots,\y_\mu]$ of length $n\mu$, let $\y^S$ denote the puncturing of $\y$ at all entries of $S$, i.e.,
\begin{equation*}
    \y^S = \big[ (\y_1)_{\sbar_1}, \dots, (\y_\mu)_{\sbar_\mu}\big].
\end{equation*}
Let $\H$ be a parity-check matrix of an MDS code of length $n$ and dimension $n-r$. Then, the columns of $\H$ indexed by $\s_i$, denoted by $\H_{\s_i}$, are invertible and we have for any codeword $\x$ of the code
\begin{equation*}
\0 = \H \x^\top = \H_{\s_i} \x_{\s_i}^\top + \H_{\bar{\s}_i} \x_{\bar{\s}_i}^\top \quad
\Rightarrow \quad \x_{\s_i}^\top = \H_{\s_i}^{-1} \H_{\bar{\s}_i} \x_{\bar{\s}_i}^\top
\end{equation*}

Hence, it directly follows from \eqref{eq:any_parities_first_parity_eq} that 
\begin{align*}
{(\c_i)_{\s_i}^{(a)}}^\top = {\H_{\s_i}^{(a)}}^{-1} \H_{\bar{\s}_i}^{(a)} (\c_i)_{\bar{\s}_i}^\top,
\end{align*}
and by \eqref{eq:any_parities_second_parity_eq} that (note that $\H^{(a)}$ has entries in $\F_q$, so $\H^{(a)} = {\H^{(a)}}^{q^j}$ for any $j$)
\begin{align*}
0 &= \sum_{i=1}^{\mu} \alphaVec_{i}^{q^j} {\c_i^{(a)}}^\top = \sum_{i=1}^{\mu} (\alphaVec_{i})_{\s_i}^{q^j} {(\c_i)_{\s_i}^{(a)}}^\top + (\alphaVec_{i})_{\sbar_i}^{q^j} {(\c_i)_{\sbar_i}^{(a)}}^\top \\
  &= \sum_{i=1}^{\mu} \Big[\underbrace{ (\alpha_i)_{\s_i} {\H_{\s_i}^{(a)}}^{-1} \H_{\bar{\s}_i}^{(a)} + (\alpha_i)_{\sbar_i}}_{=: \, \gammaVec_{\s_i}^{(a)}} \Big]^{q^{j}} {(\c_i)_{\sbar_i}^{(a)}}^\top.
\end{align*}
Thus, the vector
$\c^{S} = \big[(\c_1)_{\sbar_1}^{(a)}, (\c_2)_{\sbar_2}^{(a)}, \dots, (\c_\mu)_{\sbar_\mu}^{(a)} \big]$, which is the $a$-th row of a codeword punctured at the positions in $S$,
is contained in a code with parity-check matrix
\begin{align*}
\H_\gamma^{(a)} := 
\begin{bsmallmatrix}
{\gammaVec_S^{(a)}}^{q^0} \\
{\gammaVec_S^{(a)}}^{q^1} \\
\vdots \\
{\gammaVec_S^{(a)}}^{q^{s-1}} \\
\end{bsmallmatrix},
\end{align*}
where
\begin{align*}
{\gammaVec_S^{(a)}} := \Big[ \gammaVec_{\s_1}^{(a)}, \gammaVec_{\s_2}^{(a)}, \dots, \gammaVec_{\s_\mu}^{(a)} \Big] \in \F_{q^M}^{\mu(n-r)}.
\end{align*}

By definition, we have
\begin{align*}
\gammaVec_{\s_i}^{(a)} = (\alpha_i)_{\s_i} {\H_{\s_i}^{(a)}}^{-1} \H_{\bar{\s}_i}^{(a)} + (\alpha_i)_{\sbar_i}.
\end{align*}
Since ${\H_{\s_i}^{(a)}}^{-1} \H_{\bar{\s}_i}^{(a)}$ is an $r \times (n-r)$ matrix, each entry of $\gammaVec_{\s_i}^{(a)}$, and thus each entry of $\gammaVec_S^{(a)}$, is linear combination of at most $r+1$ of the $\alpha_{i,j}$. Furthermore, each such linear combination contains, non-trivially, one element from $\alpha_{i,j}$ (namely the corresponding entry in $(\alpha_i)_{\sbar_i}$) that appears only in this linear combination.
Any set of $s$ entries from $\gammaVec_S^{(a)}$ depend on at most $s(r+1)$ of the $\alpha_{i,j}$, which are linearly independent by the independence assumption.
Hence, the $s$ entries from $\gammaVec_S^{(a)}$ are also linearly independent over $\F_q$.
This means that any $s$ columns of the parity-check matrix $\H_\gamma^{(a)}$ are linearly independent and $\H_\gamma^{(a)}$ is a parity-check matrix of an $[n\mu-r\mu,n\mu-r\mu-s]_{q^M}$ MDS code.
Thus, the overall code is a PMDS array code.

It remains to show that the local codes are $d$-MSR codes. The proof is equivalent to the proof of the locally $d$-MSR property of Theorem~\ref{thm:PMDSMSR}. First, note that the $\beta_{i,j}$ in Construction~\ref{con:PMDSany} are the (distinct) elements $\beta^0,\beta^1,...,\beta^{rn-1}$. Now consider the $j$-th local array code. The $a$-th row fulfills the parity check equations given in (\ref{eq:PMDSanyLocal}) and since all elements $\beta_{i,j}$ are distinct, it is immediate that the local group is an $[n,n-r; b^{n}]$ Ye-Barg $d$-MSR code as in Definition~\ref{def:yeBarg}.
\end{proof}

Similar to Construction~\ref{con:SDcodesLocalRegeneration} in the previous section, we give the following upper bound on the minimal field size for which $d$-MSR PMDS codes of the form as in Construction~\ref{con:PMDSany} exist.

\begin{theorem}
Let $\mu,n,r,s$ be fixed.
There is a $d$-MSR PMDS array code as in Construction~\ref{con:PMDSany} of field size 
\begin{equation*}
    q^M \leq 2n(d+1-(n-r)) (2n\mu)^{s(r+1)-1}
\end{equation*}
 and subpacketization
\begin{equation*}
    \ell = \big[d+1-(n-r)\big]^n.
\end{equation*}
\end{theorem}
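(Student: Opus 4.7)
The plan is to invoke Lemma~\ref{lem:PMDS_any}: Construction~\ref{con:PMDSany} produces a $d$-MSR PMDS array code as soon as (i) the prime power $q$ satisfies $q\geq bn$ with $b = d+1-(n-r)$, so that the set $\mathcal{B}$ of distinct elements $\beta_{i,j}$ exists in $\F_q$ and the Ye--Barg local codes are well-defined, and (ii) the sequence $\Gamma = (\alpha_{1,1}, \dots, \alpha_{\mu,n}) \in \F_{q^M}^{\mu n}$ has the property that every subset of $(r+1)s$ entries is $\F_q$-linearly independent. The subpacketization $\ell = b^n$ is built into Construction~\ref{con:PMDSany}, so the theorem reduces to exhibiting $q$ and $M$ such that $q^M$ meets the claimed bound.

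First I would fix $q$ to be the smallest prime power with $q \geq bn$. Bertrand's postulate (applied to the interval $[bn, 2bn]$) guarantees such a $q$ with $q \leq 2bn = 2n(d+1-(n-r))$; this takes care of both the local-code requirement of Construction~\ref{con:PMDSany} and the leading factor of the bound, and $\mathcal{B}$ can then be chosen as any $bn$ distinct elements of $\F_q$. Next I would construct $\Gamma$ by translating condition (ii) into the statement that the vectors representing the $\alpha_{i,j}$'s in an $\F_q$-basis of $\F_{q^M}$ form the columns of a parity-check matrix of a linear $[\mu n, \mu n - M, \geq (r+1)s+1]$ code over $\F_q$. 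The natural candidate is a generalized Reed--Solomon code of length $\mu n$ and dimension $\mu n - (r+1)s$ over an extension $\F_{q^{M'}}$ of $\F_q$ of size at least $\mu n$: any $(r+1)s$ columns of its parity-check matrix are $\F_{q^{M'}}$-linearly independent and hence $\F_q$-linearly independent, and after identifying $\F_{q^{M'}}^{(r+1)s} \cong \F_{q^{M'(r+1)s}}$ the columns become the required elements $\alpha_{i,j}$, so one can take $M = M'(r+1)s$.

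The main obstacle lies in controlling $q^M$ tightly enough to match the stated bound rather than the naive $(2n\mu)^{s(r+1)}$ that the two-step argument above produces. A direct combination of the two applications of Bertrand's postulate---one to obtain $q \leq 2bn$ and one to obtain a power $q^{M'}$ of $q$ of size $\leq 2n\mu$---introduces an extra factor of order $\mu/b$ compared to the desired inequality. To remove this factor, I would separate the roles of the two field constraints: of the $s(r+1)$ coordinates used to house the MDS parity-check columns, only $s(r+1)-1$ of them truly require the extension of size $\leq 2n\mu$, while the remaining coordinate can live in the base field $\F_q$ itself, which is already forced to have size $\leq 2bn$. Implementing this hybrid construction in a way that still certifies $(r+1)s$-wise $\F_q$-linear independence of the columns (for instance, by choosing the MDS code with a distinguished ``short'' coordinate realized over $\F_q$) is the delicate step of the argument. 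Once it is carried out, combining the resulting bound $q^M \leq 2bn \cdot (2n\mu)^{s(r+1)-1}$ with $\ell = b^n = [d+1-(n-r)]^n$ from Construction~\ref{con:PMDSany} concludes the proof.
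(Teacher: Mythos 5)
Your skeleton is the same as the paper's (fix $q$ as the smallest prime power $\geq bn$ via Bertrand, then realize the $\alpha_{i,j}$ as columns of a parity-check matrix of a $q$-ary code of length $n\mu$ and minimum distance $\geq s(r+1)+1$, so that Lemma~\ref{lem:PMDS_any} applies), and you correctly diagnose that the naive Reed--Solomon-over-an-extension construction only yields $q^M \leq (2n\mu)^{s(r+1)}$, off by a factor of order $\mu/b$. However, the proof is not complete: the step you defer as ``delicate'' is exactly the content of the theorem's bound, and as written your ``hybrid construction'' is only a heuristic --- you do not exhibit a code of length $n\mu$ over $\F_q$ with distance $\geq s(r+1)+1$ and redundancy $M \leq 1 + [s(r+1)-1]\log_q(2n\mu)$, nor verify that the $s(r+1)$-wise $\F_q$-independence survives your modification.

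The missing ingredient is standard: the paper invokes the BCH/alternant redundancy bound (\cite[Problem~8.9]{roth2006introduction}). Concretely, take the smallest $a$ with $n' = q^a - 1 \geq n\mu$, and the parity-check matrix over $\F_{q^a}$ with rows $(\alpha^{ji})_{i}$ for $j = 0,\dots,s(r+1)-1$; expanding each row over an $\F_q$-basis of $\F_{q^a}$, the $j=0$ row is all-ones and contributes only \emph{one} $\F_q$-row instead of $a$, giving an $[n', n'-M, s(r+1)+1]_q$ code with $M \leq 1 + [s(r+1)-1]a$, which one then shortens to length $n\mu$. This is precisely your intuition that ``one coordinate can live in $\F_q$,'' but note the saving occurs per \emph{parity-check row} (one of the $s(r+1)$ rows is already $\F_q$-valued), not by relocating a column or coordinate of the code; making that identification and citing (or reproving) the BCH redundancy bound is what turns your sketch into a proof. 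With $a \leq \log_q(2n\mu)$ this yields $q^M \leq q\,(2n\mu)^{s(r+1)-1} \leq 2bn\,(2n\mu)^{s(r+1)-1}$ as claimed.
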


\begin{proof}
We choose $q$ and $M$ large enough such that we can ensure that suitable sequences $\alpha_{i,j}$ and $\beta_{i,j}$ exist.
A sufficient condition for the existence of the $\beta_{i,j}$ is $q \geq n(d+1-(n-r))$.
Thus, we can choose $q$ to be the smallest prime power greater or equal to $n(d+1-(n-r))$, which is at most $q \leq 2n(d+1-(n-r))$ by Bertrand’s postulate.

For the $\alpha_{i,j}$, it is a bit more involved.
By Lemma~\ref{lem:PMDS_any}, it suffices to find $n\mu$ elements from $\F_{q^M}$ of which any subset of $s(r+1)$ elements is linearly independent.
We use the same idea as in \cite[Lemma~7]{gabrys2018constructions}: take the columns of a parity-check matrix of a $\Code[n\mu,n\mu-M,s(r+1)+1]_q$ code and interpret each column $\F_q^M$ as an element of $\F_{q^M}$.
The independence condition is then fulfilled due to the choice of minimum distance.

The remaining question is for which $M$ and $q$ a code with parameters $[n\mu,n\mu-M,s(r+1)+1]_q$ exists.
We use the result in \cite[Problem~8.9]{roth2006introduction}, which we can reformulate in our terms as: For any $n' = q^a-1$, there exists a code with parameters $[n',n'-M,s(r+1)+1]_q$, where
\begin{align*}
    M \leq 1+ \big[s(r+1)-1\big]a.
\end{align*}
Choose $a$ to be the smallest integer with $n' = q^a-1 \geq n\mu$.
Note that there is such an $a$ with $q^a-1 \leq 2n\mu-1$, i.e., $a \leq \log_q(2n\mu)$.
Hence, there is an $[n',n'-M,s(r+1)+1]_q$ code with $M \leq 1+ \big[s(r+1)-1\big]\log_q(2n\mu)$.
Shortening the codes gives an $[n\mu,n\mu-M,s(r+1)+1]_q$ code with $M \leq 1+ \big[s(r+1)-1\big]\log_q(2n\mu)$.
\end{proof}


\IEEEtriggeratref{3}



\bibliographystyle{IEEEtran}
\bibliography{main.bib}
\end{document}